\documentclass[aps,preprint,nofootinbib,superscriptaddress]{revtex4}%
\usepackage{hyperref}
\usepackage{amsmath}
\usepackage{amsfonts}
\usepackage{amssymb}
\usepackage{graphicx}%
\setcounter{MaxMatrixCols}{30}
\providecommand{\U}[1]{\protect\rule{.1in}{.1in}}
\newtheorem{theorem}{Theorem}

\newtheorem{lemma}[theorem]{Lemma}

\newenvironment{proof}[1][Proof]{\noindent\textbf{#1.} }{\ \rule{0.5em}{0.5em}}

\usepackage{natbib}
\usepackage{graphicx}

\usepackage{tikz-cd}

\def\be{\begin{equation}}
\def\ee{\end{equation}}

\begin{document}
\title{Halperin states can produce any filling fractions}
\author{En-Jui Kuo}
\email{kuoenjui@umd.edu}
\affiliation{Joint Quantum Institute, NIST/University of Maryland, College Park, MD 20742, USA}
\date{\today }

\begin{abstract}
We study bilayer Fractional Quantum Hall State also known as Halperin state. We prove that for any filling fractions (positive rational numbers), there are infinite solutions that imply that infinite topological states are corresponding to any given filling fraction. Our results can apply to any charge vectors.
\end{abstract}
\maketitle

\section{Introduction}

One way to describe the effective field theory of the FQHE state is to use the so-called $K$ matrix approach. One important state is called Halperin state \citep{delmastro2019symmetries, levin2007particle, seidel2008halperin}. Given any integer charge vector $t=(t_1, t_2)$, the filling fraction of such state is given by this formula \citep{tong2016lectures, polychronakos1990abelian, dunne1999aspects}:
\be \label{firsteq}
\nu=\frac{-2 l t_2 t_1+n t_1^2+m t_2^2}{m n-l^2}
\ee
where $m, n, l$ are three integers. In this case, the $K$ matrix is 
\be\label{K}
\left(\begin{array}{cc}  m & l\\ l & n \end{array}\right).
\ee
The physical meaning of $det(K)=|mn-l^2|$ corresponds to the number of nonequivalent excitation to the bilayer system. So if there are two different $K$ matrix correspond to the same filling fraction but they have different determinant, it means they are the different topological phase of matter. Here come the natural questions: fixed any filling fraction and fixed charge vector, how many topological phases of matter it has? To solve this question, we asked how many allowed integers $m, n, l$ satisfy this equation \ref{K}. We will answer these equations step by step in this paper. We use $\mathbb{Q}, \mathbb{Z}, \mathbb{N}, \mathbb{Q}_{> 0}, \mathbb{N}_{0}$ to denote set of rational numbers, integers, positive integers, positive rational numbers, nonnegative integers respectively. $p \perp q$ means $p$ and $q$ are coprime to each other. We are ready to state our two main theorems.

\begin{theorem}\label{th1}
Given $(t_1, t_2) \in \mathbb{N}^2 \setminus (0,0)$ and any rational filling fractions $\nu \in \mathbb{Q}_{> 0}$, there is always a non negative integer solution $m, n, l$ such that $\nu=\frac{p}{q}$ for given $\nu$ which satisfy $mn-l^2 > 0$.
This is equivalent to, the following set $A_{t_1, t_2}=\mathbb{Q}_{> 0}$ which is defined to be.
\be
A_{t_1, t_2}=\{ \frac{-2 l t_2 t_1+n t_1^2+m t_2^2}{m n-l^2} | m \geq 1, n \geq 1, l \geq 0, (m, n, l) \in \mathbb{N}_{0}^3, mn-l^2 \geq 1\}.
\ee 
\end{theorem}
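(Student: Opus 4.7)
My plan is to combine a scaling reduction with an explicit family of solutions. First, I would note that the formula~\eqref{firsteq} is covariant under scaling: if $(m_0, n_0, l_0)$ produces $\nu_0$, then $(cm_0, cn_0, cl_0)$ produces $\nu_0/c$ for every $c\in\mathbb{N}$, since the numerator is linear and $mn-l^2$ is quadratic in $(m,n,l)$; the scaled triple automatically preserves $m,n\ge1$, $l\ge0$, and $mn-l^2\ge1$. This reduces the theorem to the claim that every positive integer $p$ lies in $A_{t_1,t_2}$, because then scaling a triple that realizes $\nu=p$ by $c=q$ realizes any $\nu=p/q$.

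Second, to guess a usable ansatz I would multiply the defining relation $q\nu=p$ through in~\eqref{firsteq} and complete the square in $l$, arriving at the equivalent Diophantine form
\be
(pm-qt_1^{2})(pn-qt_2^{2})=(pl-qt_1t_2)^{2}.
\ee
For $q=1$ and $p\ge2$, this factored form invites the one-parameter triple
\be
(m,n,l)=\bigl(t_1^{2},\,(p^{2}+p-1)\,t_2^{2},\,p\,t_1t_2\bigr),
\ee
for which a one-line expansion gives numerator $=(p^{2}-p)\,t_1^{2}t_2^{2}$ and $mn-l^{2}=(p-1)\,t_1^{2}t_2^{2}\ge1$, hence $\nu=p$.

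The main obstacle is the degeneracy at $p=1$, where the above family collapses to $mn=l^{2}$. I would handle this with a small additive perturbation, split into two subcases. If $t_1\ne t_2$, the triple $(m,n,l)=(t_1^{2}+1,\,t_2^{2}+1,\,t_1t_2+1)$ makes both the numerator and $mn-l^{2}$ equal to $(t_1-t_2)^{2}\ge1$, so $\nu=1$. If $t_1=t_2=t$, the triple $(m,n,l)=(t^{2}+1,\,t^{2}+1,\,t^{2}-1)$ makes both quantities equal to $4t^{2}$, again yielding $\nu=1$. Combining these constructions with the scaling reduction of the first step then covers every positive rational filling fraction, establishing $A_{t_1,t_2}=\mathbb{Q}_{>0}$.
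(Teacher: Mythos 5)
Your strategy is in essence the paper's: the scaling observation (the paper's Lemma in Section IV, $(m,n,l)\mapsto(qm,qn,ql)$ divides $\nu$ by $q$) reduces everything to realizing each positive integer $p$, and you then exhibit explicit triples for $p\ge 2$ and for $p=1$. What differs is how the explicit triples are found: the paper changes variables and parametrizes a one-parameter family of $K$-matrices ($m=t_1^2$ for $p\ge2$, $m=2t_1^2$ for $p=1$, both indexed by $\beta$), while you read a single triple off the factorization $(pm-qt_1^2)(pn-qt_2^2)=(pl-qt_1t_2)^2$, which is indeed equivalent to $\nu=p/q$ because the difference of its two sides equals $p\bigl[p(mn-l^2)-q(nt_1^2+mt_2^2-2lt_1t_2)\bigr]$. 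Your verifications check out: for $p\ge2$ the numerator is $(p^2-p)t_1^2t_2^2$ and $mn-l^2=(p-1)t_1^2t_2^2$, and both $p=1$ subcases give numerator equal to determinant. (The paper's $\beta$-families also yield the unboundedness of the determinant needed for its Theorem 2; your single triples do not, but that is not required for Theorem 1.)

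There is, however, one genuine hole relative to what the theorem is meant to cover: charge vectors with a vanishing component. The "$\setminus(0,0)$" in the statement, the abstract's claim of "any charge vectors", Section II ($t=(1,0)$) and Case 1 of Section IV all show the intended scope includes, say, $t_2=0$, and there your $p\ge2$ triple degenerates: for $t=(1,0)$ and $\nu=2$ it becomes $(m,n,l)=(1,0,0)$, violating $n\ge1$ and $mn-l^2\ge1$, while your $p=1$ construction plus scaling only reaches $\nu=1/c$, so $\nu=2$ is unreachable by your argument as written. The repair is short and independent of your main idea: for $t=(t_1,0)$ the filling is $\nu=nt_1^2/(mn-l^2)$, and taking $l=pm-qt_1^2$, $n=p\,l$ with $m$ large enough gives $mn-l^2=q t_1^2\,l>0$ and $\nu=p/q$ directly (this is exactly the paper's Section II ansatz after the rescaling $\frac{p}{q}\to\frac{p}{t_1^2q}$). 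With that degenerate case added, your proof of $A_{t_1,t_2}=\mathbb{Q}_{>0}$ is complete.
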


\begin{theorem}
Moreover, given any $C >0$. There always exists some $(m, n, l) \in \mathbb{N}_{0}^3$ produce the filling fraction $\nu$ such that $mn-l^2>C$. This fact shows that the number of solution $(m, n, l)$ is \textbf{infinite} which imply: 
\textbf{for any rational filling fractions, there exists infinite type of phase of matter and they all give the same filling fractions.} We can collect the determinant of the solutions come from $\nu$. We donated this set as $sol_{\nu, t_1, t_2}$.
\be
sol_{\nu, t_1, t_2}=\{ mn-l^2| m \geq 1, n \geq 1, l \geq 0, (m, n, l)  \in \mathbb{N}_{0}^3, mn-l^2 \geq 1, \frac{-2 l t_2 t_1+n t_1^2+m t_2^2}{m n-l^2}=\nu \}.
\ee

The above statement is the same as $sol_{\nu, t_1, t_2}$ is unbounded or $\sup sol_{\nu, t_1, t_2} = \infty$ where sup is defined to be the supremum of a set.
\end{theorem}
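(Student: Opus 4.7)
The plan is to promote one solution from Theorem~\ref{th1} to an infinite family along a line contained in the quadric $qN-pD=0$ in $(m,n,l)$-space. With $\nu=p/q$ in lowest terms, Theorem~\ref{th1} gives $(m_0,n_0,l_0)\in\mathbb{N}_0^3$ satisfying $m_0,n_0\geq 1$, $l_0\geq 0$, $D_0:=m_0n_0-l_0^2\geq 1$, and $N_0:=n_0t_1^2+m_0t_2^2-2l_0t_1t_2=\nu D_0$. Set $A:=m_0t_2-l_0t_1$ and $B:=n_0t_1-l_0t_2$; these cannot both vanish, for otherwise $m_0n_0t_1t_2=l_0^2t_1t_2$ would contradict $D_0\geq 1$. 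From the direct computation $(B,A)^T=D_0K_0^{-1}(t_1,t_2)^T$ one reads off the two key identities
$$At_2+Bt_1=N_0,\qquad n_0A^2+2l_0AB+m_0B^2=\nu D_0^2.$$

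Next I would consider the family
$$(m_k,n_k,l_k):=(m_0+kA^2,\; n_0+kB^2,\; l_0-kAB),\qquad k\in\mathbb{N}_0.$$
Substituting and using the two identities yields $N_k=N_0(1+k\nu D_0)$ and $D_k=m_kn_k-l_k^2=D_0(1+k\nu D_0)$, so $N_k/D_k=\nu$ for every $k$ while $D_k\to\infty$. The requirements $m_k,n_k\geq 1$ are automatic, and $l_k=l_0-kAB\geq 0$ holds for all $k\geq 0$ whenever $AB\leq 0$. In that branch the family already produces infinitely many valid triples with unbounded determinants realizing $\nu$.

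The residual case $AB>0$ is the main obstacle. Here I would first move the base point inside its $SL_2(\mathbb{Z})$-stabilizer orbit. The integer matrix
$$P_j=\begin{pmatrix}1+jt_1t_2 & jt_2^2\\ -jt_1^2 & 1-jt_1t_2\end{pmatrix}$$
has $\det P_j=1$ and satisfies $P_j^T(t_1,t_2)^T=(t_1,t_2)^T$, so $K':=P_j^TK_0P_j$ is an integer symmetric matrix with the same $\nu$ and same determinant as $K_0$. Since $P_j(t_2,-t_1)^T=(t_2,-t_1)^T$, the associated vector $u':=K'(t_2,-t_1)^T$ equals $P_j^T(A,-B)^T$; using the first identity, its components unwind to $A'=A+jt_1N_0$ and $B'=B-jt_2N_0$. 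Taking $j$ to be a large enough positive integer forces $B'<0$ while $A'$ remains positive, so $A'B'<0$. The entries of $K'$ are explicit quadratics in $j$ whose $j^2$ coefficients $t_1^2\nu D_0,\,t_2^2\nu D_0,\,t_1t_2\nu D_0$ are all positive, so for $j$ sufficiently large one has $m',n'\geq 1$ and $l'\geq 0$. Feeding this shifted base into the family construction of the previous paragraph supplies the required infinite sequence of valid triples with unbounded determinant.

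The principal technical hurdle is therefore the sign-control in the $SL_2$ step: one must exhibit an integer $j$ making $K'=P_j^TK_0P_j$ simultaneously a valid $K$ matrix and forcing $A'B'\leq 0$. Everything else reduces to the two displayed identities, which encode the ruling direction $(A^2,B^2,-AB)$ of the quadric $qN-pD=0$ through $(m_0,n_0,l_0)$.
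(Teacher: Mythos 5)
Your argument is correct, but it takes a genuinely different route from the paper. The paper proves unboundedness constructively: it reduces to integer fillings via the scaling lemma ($(m,n,l)\mapsto(qm,qn,ql)$ turns a solution for $p$ into one for $p/q$) and then writes down explicit one-parameter families of $K$ matrices, e.g. $m=t_1^2$, $l=(p-1)t_1\beta+t_1t_2$, $n=(p-1)p\beta^2+2(p-1)t_2\beta+t_2^2$ with determinant $\beta^2(p-1)t_1^2$ (and an analogous family for $p=1$), so the infinite family and the arbitrarily large determinant are read off directly. You instead bootstrap from one solution guaranteed by Theorem~\ref{th1} and move along the ruling direction $(A^2,B^2,-AB)$ of the quadric $qN-pD=0$; your two identities $At_2+Bt_1=N_0$ and $n_0A^2+2l_0AB+m_0B^2=N_0D_0$ both check out, giving $D_k=D_0(1+kN_0)\to\infty$ with $N_k/D_k=\nu$, and the awkward branch $AB>0$ is repaired by conjugating with the unipotent stabilizer $P_j$ of the charge vector, which indeed satisfies $\det P_j=1$, $P_j^Tt=t$, $P_j(t_2,-t_1)^T=(t_2,-t_1)^T$, and sends $(A,B)\mapsto(A+jt_1N_0,\,B-jt_2N_0)$. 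Your approach is more structural (every base solution spawns infinitely many, independent of the explicit parametrization), while the paper's is self-contained and does not need Theorem~\ref{th1} as input. Two small points you should make explicit: (i) $A$ and $B$ cannot both vanish simply because $At_2+Bt_1=N_0=\nu D_0>0$; your stated argument via $m_0n_0t_1t_2=l_0^2t_1t_2$ silently assumes $t_1t_2\neq 0$; (ii) the positivity of the $j^2$ coefficients $t_1^2N_0,\,t_2^2N_0,\,t_1t_2N_0$ of the entries of $P_j^TK_0P_j$ requires $t_1,t_2>0$, which does hold exactly where you need it because $AB>0$ is impossible when $t_1t_2=0$ (if, say, $t_2=0$ then $A=-l_0t_1\le 0$ and $B=n_0t_1>0$), but this should be said. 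With those remarks added, your proof is complete.
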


Given $(t_1, t_2) \in \mathbb{N}^2 \setminus (0,0)$, let $\nu=\frac{p}{q}, p \perp q$, where $p, q$ are positive integers. One can write eq \ref{firsteq} into this Diophantine equation:
\be
p(mn-l^2)-q(-2 l t_2 t_1+n t_1^2+m t_2^2)=0.
\ee
Our theorems are equivalent to there are always infinite integer solutions where $m>0, n>0, l\geq0, mn-l^2>0$. In general, for any multivariable polynomial equation, whether it has an integer solution is \textbf{undeciable} \citep{koenigsmann2014undecidability}. Namely, no algorithm or systematic approach can be used to find whether it has an integer solution. Besides, it is also very difficult to find how many solutions given an equation. There are many solved or unsolved problems in the number theory. For examples, the famous solved examples including Fermat Last Theorem \citep{darmon1995fermat}, open problems including Erdos–Straus conjecture \citep{terzi1971conjecture} and \citep{wooley2000sums} which is called the sum of three cubes, it is an open problem to characterize the numbers that can be expressed as a sum of three cubes of integers, allowing both positive and negative cubes in the sum. There is an enormous conjecture of this kind of problem. After reviewing these three problems, one may think they are all dealing with the multivariable polynomial which degree higher than 2. So one may think the quadratic integer equation is simple to understand, however, this is not the case. For example $x^2+y^2=z^2$ has infinite nonzero solutions while $x^2+y^2=3z^2$ is not. Another famous example is negative Pell's equation \citep{barbeau2006pell} which is quadratic but still unclear whether it has a solution in some cases although it only has two variables. More deep results related to Quadratic Diophantine equations can be seen here \citep{watson1960quadratic}. We list some interesting Quadratic Diophantine equations in Appendix c to illustrate the nontriviality of Quadratic Diophantine equations and our result.
Recently, The relation between arithmetic functions and the Abelian Chern Simon theory with different symmetries can be found in \citep{delmastro2019symmetries}. One final remark is that the number of different topological excitations of the given $K$ matrix is $|det(K)|$. But in this paper, we only consider $det(K)>0.$

The paper is organized as follows. In sections II and III, we briefly prove these two theorems in the special case. In section IV, we prove our general results. Our conclusion and partial generalization are included in section V, VI.

\section{First Special Case $t=(1, 0)$}

As the toy example, we try to prove the $t=(1, 0)$ case. In other words, we want to prove $A_{1, 0} =\mathbb{Q}_{>0}$ and $\sup sol_{\nu, 1, 0} = \infty$. The derivation is simple. In this case, the filling fraction is $\nu=\frac{p}{q}=\frac{n}{mn-l^2}$. We have to show given any filling fractions, there exists infinite solutions which generate this filling fraction.\\
\begin{proof} First we can choose large enough integer $m$ such that $mp-q \geq 0$ and we choose $K$ matrix to be 
\be\label{case 1}
K=\left(\begin{array}{cc}  m & pm-q\\ pm-q & p(pm-q) \end{array}\right).
\ee
One can easily check this gives us correct filling fractions $\frac{p}{q}$ and the determinant $det K= pm(pm-q)-(pm-q)^2=pmq-q^2=q(pm-q)$ which can be arbitrary large by choosing large enough integer $m$. So we prove $A_{1, 0} =\mathbb{Q}_{>0}$ and $\sup sol_{\nu, 1, 0} = \infty$. We present another interesting way to prove there are infinite solutions. One can notice that if $(m, n, l)$ is a solution, then $(m, \alpha^2 n, \alpha l)$ is also a solution where $\alpha \in \mathbb{N}$. So one solution can generate infinite solutions. Unfortunately, this type of linear integer transformation is not easily to be found in general charge vector case.
\end{proof}

\section{Second special case $t=(1, 1)$}

In this section, we fixed charge vector $t=(1, 1)$. The filling fractions $\nu$ of the Halperin state (double layer fractional quantum hall ) is given by this formula:
\be \label{main}
\frac{m+n-2l}{mn-l^2}=\nu
\ee
where $m>0,n>0,l \geq 0.$ and $(m, n, l)$ must be integers. We will prove that just by restricting in this state or 2 by 2 $K$ matrix, one can already generate any rational number filling fraction $\nu \in \mathbb{Q}_{>0}$. Any filling fractions can be made by cleverly choosing three non negative integers $(m, n, l)$ such that they satisfy eq (\ref{main}) and these constraints: $mn-l^2>0$ and $ m>0, n>0, l \geq 0$. Besides, for any rational filling fractions, there are infinite solutions. In this case $t=(1, 1)$, we prove nontrivial results and specific construction for the solutions.

Let us show some examples before going to the details. For charge vector $t=(1, 1)$ and rational filling fraction $\nu$, we denote the solution as $(m, n, l, mn-l^2)$. One can easily write a computer program to search for solutions. We present some solutions of $\nu=\frac{2}{3}, \frac{13}{17}, \frac{16}{17}$. We will show that given any filling fractions, there are always infinite solutions. Now we can state this fact and give a rigorous proof.

\begin{tabular}{ |p{3cm}|p{3cm}|p{3cm}|  }
\hline
\multicolumn{3}{|c|}{Partial solution List $(m, n, l, mn-l^2)$} \\
\hline
$\nu=\frac{2}{3}$ & $\nu=\frac{13}{17}$ & $\nu=\frac{16}{17}$ \\
\hline
$(2, 6, 3, 3)$ & $(2, 21, 5, 17)$ & $(2, 272, 17, 255)$ \\
$(2, 14, 4, 12)$ &  $(2, 66, 8, 68)$   & $(5, 159, 26, 119)$ \\
$(2, 62, 7, 75)$ & $(2, 137, 11, 153)$ & $(12, 194, 47, 119)$\\
$(2, 86, 8, 108)$  & $(5, 68, 17, 51)$ & $(47, 497, 152, 255)$ \\
$(3, 15, 6, 9)$ & $(9, 42, 19, 17)$ & $(54, 320, 131,  119)$  \\
$(3, 39, 9, 36)$ & $(21, 66, 37, 17)$ & $(75, 369, 166,  119)$  \\
$(3, 123, 15, 144)$ & $(20, 113, 47, 51)$ & $(77, 587, 212, 255)$ \\
\hline
\end{tabular}

\begin{theorem}
Given $0<\nu=\frac{p}{q}, p, q \in \mathbb{N}, p \perp q.$ We have $A_{1, 1}=\mathbb{Q}_{>0}$ and $\sup sol_{\nu, 1, 1} = \infty$.
\end{theorem}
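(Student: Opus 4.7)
My plan starts by recasting the Diophantine condition $p(mn-l^2)=q(m+n-2l)$ into a cleaner algebraic form. A direct expansion shows
\be
(pm-q)(pn-q) - (pl-q)^2 \;=\; p\bigl[p(mn-l^2)-q(m+n-2l)\bigr],
\ee
so the filling-fraction equation is \emph{equivalent} to the factorization identity $(pm-q)(pn-q)=(pl-q)^2$. I would make this equivalence the backbone of the argument, since it reduces the problem to finding integer triples $(A,B,C)=(pm-q,\,pn-q,\,pl-q)$ with $AB=C^2$ subject to the residue conditions $A\equiv B\equiv C\equiv -q\pmod p$ and to the positivity constraints on $m,n,l$ and on $mn-l^2$.

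Next I would exploit the classical parametrization of $AB=C^2$ over the integers: up to a common factor $d$, one can write $A=da^2$, $B=db^2$, $C=dab$ with $\gcd(a,b)=1$. This suggests the ansatz $pm-q=da^2$, $pn-q=db^2$, $pl-q=dab$, which automatically solves the factorization identity. The integrality of $m,n,l$ then reduces to the single congruence $d\equiv -q\pmod p$, provided one picks $a,b$ coprime to $p$ with $a\equiv b\pmod p$.

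Concretely, I would take $a=1$, $b=1+kp$ for $k\ge 1$, and let $d$ be the least positive integer with $d\equiv -q\pmod p$ (taking $d=1$ in the trivial case $p=1$). All three quantities $m=(d+q)/p$, $n=(d(1+kp)^2+q)/p$, and $l=(d(1+kp)+q)/p$ are then positive integers. A short computation using $(a-b)/p=-k$ yields
\be
m+n-2l = \tfrac{d(a-b)^2}{p} = pdk^2, \qquad mn-l^2 = qdk^2,
\ee
which simultaneously confirms $\nu=p/q$ and shows that the determinant $qdk^2$ is unbounded in $k$. This single family therefore establishes both $\tfrac{p}{q}\in A_{1,1}$ and $\sup sol_{\nu,1,1}=\infty$.

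The main obstacle is spotting the factorization identity $(pm-q)(pn-q)=(pl-q)^2$; once it is in hand, the remainder is a routine application of the $AB=C^2$ parametrization and a verification of positivity. Minor bookkeeping issues are the choice of $d$ guaranteeing $m\ge 1$ (which follows from $d,q>0$) and the degenerate case $p=1$, where every congruence mod $p$ is vacuous and one simply sets $d=1$ and checks the formulas directly.
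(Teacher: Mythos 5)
Your proposal is correct, and it takes a genuinely different route from the paper. You rewrite the constraint $p(mn-l^2)=q(m+n-2l)$ via the identity $(pm-q)(pn-q)-(pl-q)^2=p\bigl[p(mn-l^2)-q(m+n-2l)\bigr]$, which turns the problem into the factorization equation $(pm-q)(pn-q)=(pl-q)^2$ — a generalization of the paper's own observation at $\nu=1$ that $(m-1)(n-1)=(l-1)^2$ — and then the ansatz $pm-q=da^2$, $pn-q=db^2$, $pl-q=dab$ with $a=1$, $b=1+kp$, $d\equiv -q\pmod p$, $d>0$ produces in one stroke a family with $m+n-2l=pdk^2$ and $mn-l^2=qdk^2$, giving $\nu=p/q$ and an unbounded determinant. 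The paper instead substitutes $n=2l+pk-m$, forces two nested discriminants to be perfect squares, reduces to the Pythagorean-type equation $x^2=p^2s^2+4(lp-q)^2$, and then splits into cases according to whether $\pm q$ is a quadratic residue modulo $p$ (invoking the appendix lemma), with a separate parametrized construction in the non-residue case. Your argument buys uniformity: no case split, no quadratic-residue lemma, and the positivity and integrality checks are immediate ($m=(d+q)/p\geq 1$, etc.); the paper's route, while more laborious, yields some alternative explicit families. One small imprecision: integrality reduces to $d\equiv -q\pmod p$ only because you take $a\equiv b\equiv 1\pmod p$; for general $a\equiv b$ coprime to $p$ the condition is $da^2\equiv -q\pmod p$, still solvable in $d$ since $a$ is invertible modulo $p$ — this does not affect the concrete family you actually use.
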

\begin{proof}
In order to prove this, we need to solve the equation:
\be \label{Main}
q(m+n-2l)-p(mn-l^2)=0
\ee
First, since $p,  q$ are coprime to each other, there must exist some integers $k$ such that $n=2l+pk-m.$ So the eq \ref{Main} is a quadratic equation for $m$.
\be
p \left(-k m p+k q+l^2-2 l m+m^2\right)=0.
\ee
We can solve $m$.
\be \label{m}
m=\frac{1}{2} \left(\pm \sqrt{k} \sqrt{k p^2+4 l p-4 q}+k p+2 l\right).
\ee
In order to make $m$ to be an integer, the $k(k p^2+4 l p-4 q)$ should be a perfect square. Let's say $\rho \in \mathbb{N}$. We have
\be
k(k p^2+4 l p-4 q)=\rho^2.
\ee
Again, this is a quadratic equation for $\rho$. We can solve $k$ in terms of $\rho$.
\be\label{k}
k=\frac{\pm \sqrt{4 (q-l p)^2+p^2 \rho ^2}-2 l p+2 q}{p^2}
\ee
The same reason is that $4 (q-l p)^2+p^2 \rho ^2=x^2$ where $x$ is an integer. One can solve $\rho$. We get
\be
\rho =\pm \frac{\sqrt{x^2-4 (q-l p)^2}}{p}
\ee
In order to let $\rho$ to be an integer. We define a positive integer $s=\frac{\sqrt{x^2-4 (q-l p)^2}}{p}$. Up to now, we change $(m, n, l)$ three variables to $q, x ,s$ where s is also an integer.
\be \label{key}
-p^2 s^2-4 (lp-q)^2+x^2=0.
\ee
This relation eq \ref{key} is the fundamental equation we need to solve. Up to now, we can solve this equation by \textbf{Pythagorean Triples}. From the \textbf{Pythagorean Triples}. There are infinite positive integer solutions such that $x^2+y^2=z^2$. The proof is really simple to notice that $(a^2-b^2)^2+(2ab)^2=(a^2+b^2)^2$ for any real numbers $a, b$. Here comes the main part of the proof. For more details of Pythagorean Triples, one can see Appendix b for more details.

We separate this problem into two cases: in the first cases, assuming $ps, 2(lp-q), x$ can be coprime. Notice since $p, q$ are fixed, we first need to prove there is always a solution $l, s, x$. First, one can always pick $l$ large enough such that $lp-q>0$ and we hope we can find there exists $a, b, a> b$ such that $a^2-b^2=ps, lp-q=ab$. This can be done by taking $a,b, a>b$ such that $a-b=pt$, $t$ is an positive integer. This is not true in general $(p, q)$. One can see Lemma in Appendix a. If $q$ is quadratic residue over $p$ and then this can be done. In this case, we can see
$s=t(a+b), x=a^2+b^2=b^2+(b+pt)^2$. We can solve $k$ by plug $\rho=\pm s$ into eq \ref{k}. There are two solutions of $k$. 
\be
k=\frac{-2 l p+2 q \pm x}{p^2}= t^2, -\frac{(2 b+p t)^2}{p^2}.
\ee
We can choose $k=t^2$, obvious this is integer. One can go back and solve $m, n$ by eq \ref{m}. 
\be
m=\frac{1}{2} (k p+2 l \pm s)=t (b+p t)+l, l-bt.
\ee
We can choose large enough integer $t$ such that $m=t(b+pt)+l$ is a positive integer. In this case, the corresponding $n=l-bt$. One may worry $n$ is negative. We may notice that $l=\frac{a b+q}{p}=\frac{(b+pt) b+q}{p}$. So $n=\frac{b^2+q}{p}>0$. Finally we can check the determinant $mn-l^2=q t^2>0$. This term can be arbitrary large since $t$ can be arbitrary large. So we finish the existence proof and show that there are infinite solutions. We give an examples here. Suppose $p=3, q=5$, we need to find $a^2-b^2=3s, 3l-5=ab, a,b>0, a>b.$ We can choose $a=4, b=1, t=1$.
We can find $(m, n, l)=(7, 3, 2).$ One can easily check in this case $\nu=\frac{3}{5}.$

In the other case, suppose $q$ is not quadratic residue of modulo $p$. One can still find infinite pairs of many nonzero integers $u, a, b, l$ such that $(a-b)=pt, pl-q=uab$. This proof is really simple, first we choose $l_0 \in \mathbb{N}$ such that $pl_0-q=u_0>0$. Then we have $p(l_0+u_0t)-q=u_0+u_0pt=u_0(1+pt).$ Suppose we choose $a=1+pt$ and $b=1$. In this case, we have $m=l_0+t u (p t+2), n=l_0, l=l_0+tu$. This gives us $\nu=\frac{p}{q}$. Arbitrary $u, t$ can give us infinite solutions. Determinant in this special case is $uqt^2$. We also give an example here. Suppose $\nu=\frac{5}{8}$, we then choose $u_0=2, l=2+2t, m=2 t (5 t+2)+2, n=2$. It is direct to check that:
\be
\frac{5}{8}=\frac{m+n-2l}{mn-l^2}=\frac{-2 (2 t+2)+2 t (5 t+2)+4}{16 t^2}.
\ee
Indeed, this parameterized solutions produce the filling fraction and the determinant of K matrix is $16t^2$ which can be arbitrary large. In this section, we established the following: $A_{1, 1}=\mathbb{Q}_{>0}, \sup sol_{\nu, 1, 1} = \infty.$
\end{proof}

One may ask whether we need $l$ as a free variable to generate all the filling fractions and generate infinite solutions. The answer is yes by the following two theorems. We will give a concise proof in the case of the general vector in the next section.
\begin{theorem}
Given $l \in \mathbb{N}$, there are always finitely many solutions $m, n$ such that $\nu=\frac{p}{q}$ for any $\nu$.
\end{theorem}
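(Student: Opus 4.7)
My plan is to reduce the Diophantine equation $p(mn - l^{2}) = q(m + n - 2l)$ to a product identity by a Simon-style factoring trick. Multiplying through by $p$ and adding $q^{2}$ to both sides, all the cross terms collapse into
\be
(pm - q)(pn - q) = (pl - q)^{2}.
\ee
With $l$ fixed and $p, q$ fixed by the choice of $\nu$, the right-hand side is a single integer. Assuming it is nonzero, it admits only finitely many ordered factorizations over $\mathbb{Z}$, and each factorization $d_{1} d_{2} = (pl - q)^{2}$ determines at most one candidate pair $(m, n) = ((d_{1} + q)/p, (d_{2} + q)/p)$. Only factorizations with $d_{1}, d_{2} \equiv -q \pmod{p}$ yield integer $(m, n)$, and the positivity constraints $m, n \geq 1$ and $mn - l^{2} \geq 1$ further thin the list, so the set of admissible $(m, n)$ is finite.

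The only real obstacle is the degenerate case $pl = q$, where the right-hand side of the factored equation vanishes. Since $\gcd(p, q) = 1$, this forces $p = 1$ and $q = l$, so $\nu = 1/l$; the equation then collapses to $(m - l)(n - l) = 0$, which clearly has infinitely many solutions (take $m = l$ and any $n \geq l + 1$). I would therefore state the theorem with the implicit restriction $\nu \neq 1/l$, or equivalently $pl - q \neq 0$, and prove it exactly by the factorization argument above. Aside from flagging this isolated exceptional value of $\nu$, the argument reduces to elementary divisor counting and I anticipate no further technical difficulty.
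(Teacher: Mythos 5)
Your proposal is correct, and it takes a genuinely different and more direct route than the paper. The paper's proof of this theorem is a one-line appeal to its earlier chain of substitutions: any solution is funneled through the variables $k,\rho,s,x$ into the relation $x^2-p^2s^2=4(lp-q)^2$, and finiteness is read off from the finitely many factorizations of the right-hand side. Your Simon-style factoring applies the same divisor-counting idea, but directly to the original equation: the identity $(pm-q)(pn-q)=(pl-q)^2$ is verified by expansion, it manifestly captures \emph{all} solutions without tracking sign choices through nested quadratics, and each ordered factorization of the fixed nonzero integer $(pl-q)^2$ yields at most one pair $(m,n)$. What your approach buys, beyond economy, is that it exposes a degenerate case the paper's argument silently skips: when $pl=q$ (which with $p\perp q$ forces $p=1$, $\nu=1/l$) the right-hand side vanishes, the paper's key equation degenerates to $x^2=p^2s^2$ with infinitely many solutions, and indeed $(m,n)=(l,n)$ for every $n>l$ gives $\nu=1/l$ with $mn-l^2=l(n-l)\geq 1$. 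So the theorem as stated is actually false for $\nu=1/l$, and your restriction $\nu\neq 1/l$ (equivalently $pl\neq q$) is needed; amusingly, the paper's own later remark that $\nu=1$ reduces to $(m-1)(n-1)=(l-1)^2$ is exactly your identity at $p=q=1$ and already exhibits the failure at $l=1$, $\nu=1$. With that hypothesis added, your divisor-counting proof is complete, and the congruence and positivity constraints you mention only shrink the finite list further.
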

\begin{proof} The reason is simple since only finite $s, x$ can satisfy eq \ref{key}.
\end{proof}

\begin{theorem}
Given $l \in \mathbb{N}_{0}$, there \textbf{exist} some filling fractions $0<\nu=\frac{p}{q}<1$ such that there is \textbf{no} solution $m, n$. 
\end{theorem}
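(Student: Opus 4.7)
The plan is to show that for every fixed $l \in \mathbb{N}_0$ the set of attainable filling fractions $\nu < 1$ has supremum strictly less than $1$; density of $\mathbb{Q}$ in $\mathbb{R}$ then immediately produces rationals in the gap, each of which is a filling fraction with no solution. The argument proceeds through an auxiliary variable and a single convexity estimate.

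The first step is to introduce
\[
D := (mn - l^2) - (m + n - 2l) = (m-1)(n-1) - (l-1)^2,
\]
so that $\nu = 1 - D/(D + (m+n-2l))$. The hypothesis $\nu < 1$ combined with $mn - l^2 \geq 1$ forces $D \geq 1$, and since $(l-1)^2 \geq 0$ this in turn forces $m, n \geq 2$.

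The main estimate is an upper bound on $m + n$ for fixed $D$. The equality $(m-1)(n-1) = (l-1)^2 + D =: K$ pins $(m-1, n-1)$ to a positive divisor pair of $K$, and since $d \mapsto d + K/d$ is convex its maximum over positive divisors is attained at the endpoint $d = 1$. This yields $(m-1) + (n-1) \leq K + 1$, i.e.\ $m + n - 2l \leq (l-2)^2 + D$, and substituting back gives
\[
\nu \leq \frac{(l-2)^2 + D}{(l-2)^2 + 2D}.
\]
A one-line derivative check shows the right-hand side is non-increasing in $D$, so taking $D = 1$ produces the uniform bound
\[
\nu \leq \nu_l^* := \frac{(l-2)^2 + 1}{(l-2)^2 + 2} < 1.
\]
Every rational in $(\nu_l^*, 1)$ is therefore unattainable with that $l$; for example $\nu = 2/3$ fails for $l = 2$ (where $\nu_l^* = 1/2$) and $\nu = 11/12$ fails for $l = 5$ (where $\nu_l^* = 10/11$).

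I do not anticipate a serious obstacle: the inequalities are elementary and the only point needing care is the factorization extremum, which is standard convexity, together with a direct check that the small cases $l \in \{0, 1, 2\}$ really fit the general formula (they do).
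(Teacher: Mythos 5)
Your proof is correct, and it takes a genuinely different route from the paper's. The paper argues through its earlier change of variables: for $l>0$ it picks the specific fractions $\nu=p/q$ with $q-lp=1$ and $p$ large, notes that the key equation degenerates to $x^2-p^2s^2=4$ whose only solution is $s=0$, $x=\pm 2$, and then observes $k=4/p^2\notin\mathbb{Z}$; the case $l=0$ is handled separately via $\nu=\tfrac1m+\tfrac1n\notin(\tfrac56,1)$. You instead prove a uniform structural bound: writing $D=(mn-l^2)-(m+n-2l)=(m-1)(n-1)-(l-1)^2\geq 1$ whenever $0<\nu<1$, and bounding $(m-1)+(n-1)\leq (m-1)(n-1)+1$ via the endpoint of the convex map $d\mapsto d+K/d$ on divisor pairs, you get $\nu\leq\frac{(l-2)^2+D}{(l-2)^2+2D}\leq\frac{(l-2)^2+1}{(l-2)^2+2}<1$, so the entire interval $\bigl(\nu_l^*,1\bigr)$ consists of unattainable fractions. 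All the algebra checks out ($m+n-2l\leq(l-2)^2+D$ and the monotonicity in $N$ and $D$ are verified correctly), and your bound even recovers the paper's $l=0$ threshold $5/6$ as the special case $\nu_0^*$ and is in fact attained at $(m,n)=(2,(l-1)^2+2)$, so it is sharp. What your approach buys: it is self-contained (no reliance on the earlier parametrization and its sign choices), it treats $l=0$ and $l>0$ uniformly, and it exhibits an explicit open interval of excluded fractions just below $1$ — in spirit it is a sharpened, sub-unit analogue of the paper's later boundedness theorem for $A_{t_1,t_2}^{l_0}$, which by itself only excludes large fractions and so would not suffice for the constraint $0<\nu<1$ in this statement. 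What the paper's route buys instead is explicit unattainable fractions of the form $p/(lp+1)$ (near $1/l$), showing the excluded set is not confined to a neighborhood of $1$.
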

\begin{proof}

If $l=0$, it is simple to see that the following set
$\{ \frac{m+n}{mn}=\frac{1}{m}+\frac{1}{n} | (m, n) \in \mathbb{N}^2 \}$ can not generate rational number $\nu$ where $\frac{5}{6}<\nu<1$.

For $l>0$, there exist two positive numbers $p, q$ such that $q-lp=1$ and the absolute value of $p, q$ can be arbitrary large. Suppose we choose large enough $p$, we can find the only solution for eq \ref{key} is $s=0, x=\pm 2$. However, once we put $x=\pm 2, s=0$ into $k=\frac{\pm \sqrt{4 (q-l p)^2+p^2 \rho ^2}-2 l p+2 q}{p^2}=\frac{4}{p^2}$. Then $k$ can not be integers as $p$ large enough.
\end{proof}

We give an example when $l=1$, without loss of generosity, we can assume that $m>n \geq 1$.
\be
\nu=\frac{m+n-2}{mn-1}.
\ee
since $(p, q)=1$, one can assume that $n=kp-m+2, k \in \mathbb{N}$. One can solve $m$ in terms of $k$. We get
\be
m=\frac{1}{2}(\pm \sqrt{k} \sqrt{kp^2+4p-4q}+kp+2).
\ee
In order $m$ to be positive integer, $\sqrt{k} \sqrt{kp^2+4p-4q}$ should be a square of some integers $\rho$. $k \left(k p^2+4 p-4 q\right)=\rho ^2$. One can solve $k$. We can get $k=\frac{\pm \sqrt{p^2 \rho ^2+4 (p-q)^2}+2 p-2 q}{p^2}$ . In order to make this $k$ as an integer, one need to make $p^2 \rho ^2+4 (p-q)^2$ as a square of some integers $x$.
\be
p^2 \rho ^2+4 (p-q)^2=x^2
\ee
Up to now, we can already see only finite solutions when fixed $(p, q)$.  $x^2-p^2 \rho ^2=(x-p\rho)(x+p\rho)=4(p-q)^2$ since the right hand side contains only \textbf{finite} number of factorization. So one can solve $x, \rho$ case by case if there is a solution. 

For example, $(p, q)=(3, 4)$. There is no solution to it since $\rho=\pm \frac{1}{4} \sqrt{x^2-4}$. The only integer $x=2$ and in this case: $k=0$ which leads to $m=1, n=1$ which is not admissible. One can convince that there are infinite $(p, q)$ such that there are no solutions. Other examples are $(p, q)=(5, 7), (p, q)=(4, 5)$. They do not have solution if one fixes $l=1$. 

Before we close this section, we provide two integer examples. First, we use $\nu=1$ as illustration: if $\nu= 1$. The solution is quite simple. We want to solve $\frac{m+n-2l}{mn-l^2}=1$ which implies $(m-1)(n-1)=(l-1)^2$. It is obviously that there are infinite solutions and determinant can be arbitrary large. We demonstrated another example here. For $\nu=7$, we can choose $m=1+6t(7t+2), n=1, l=1+6t$. One can easily check every $t>0$ giving $\nu=7$.

\section{General charge $t$ vectors}

Now, we are able to prove the general charge vector $t_1, t_2$. We assume that $(t_1, t_2)$ are integer vectors which components are greater or equal to zero and one of them must be positive. Our theorems are still valid. In this case, the filling fraction is
\be
\nu=\frac{-2 l t_2 t_1+n t_1^2+m t_2^2}{m n-l^2}.
\ee
We want to show the set $A_{t_1, t_2}=\mathbb{Q}_{>0}$.
We separate our proof into two steps.
\paragraph{Case 1: One of $(t_1, t_2)$ is zero, i.e $A_{t_1, 0}= \mathbb{Q}_{>0}.$}
Without loss of generosity, we can assume $t_1 \neq 0$. This is the same as $t=(1, 0)$ case by modifying any filling fraction $\frac{p}{q} \to \frac{p}{t_1^2q}$.
\paragraph{Case 2: None of $(t_1, t_2)$ is zero,  i.e $A_{t_1,  t_2}= \mathbb{Q}_{>0}.$}

Without loss of generosity, we can assume they are coprime to each other. In this case, one can show that all the theorems hold to produce all the filling fraction. To prove this fact, we first provide a simple lemma.
\begin{lemma}
If one can produce all the positive integers then one can produce any positive rationals. Namely if $\mathbb{N} \subset A_{t_1, t_2}$, we have $A_{t_1, t_2}=\mathbb{Q}_{>0}.$ 
\end{lemma}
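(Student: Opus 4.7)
The plan is to exploit the homogeneity mismatch between the numerator and denominator of the Halperin filling fraction. The numerator $-2lt_2 t_1 + n t_1^2 + m t_2^2$ is homogeneous of degree $1$ in $(m,n,l)$, while the denominator $mn-l^2$ is homogeneous of degree $2$. Consequently, replacing $(m,n,l)$ by $(\alpha m, \alpha n, \alpha l)$ for any positive integer $\alpha$ rescales the filling fraction by the factor $1/\alpha$. This one-line observation is the entire engine of the lemma, and the rest of the proof is just bookkeeping on the admissibility constraints.

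Concretely, given $\nu = p/q \in \mathbb{Q}_{>0}$ with $p,q \in \mathbb{N}$ coprime, I would first invoke the hypothesis $\mathbb{N}\subset A_{t_1,t_2}$ applied to the positive integer $p$: this gives a triple $(m_0, n_0, l_0) \in \mathbb{N}_0^3$ with $m_0 \geq 1$, $n_0 \geq 1$, $l_0 \geq 0$, and $m_0 n_0 - l_0^2 \geq 1$ such that
\be
\frac{-2 l_0 t_2 t_1 + n_0 t_1^2 + m_0 t_2^2}{m_0 n_0 - l_0^2} = p.
\ee
Then I would set $(m, n, l) = (q m_0, q n_0, q l_0)$ and compute directly, using the degree-$1$ versus degree-$2$ scaling, that
\be
\frac{-2 l t_2 t_1 + n t_1^2 + m t_2^2}{m n - l^2} = \frac{q\bigl(-2 l_0 t_2 t_1 + n_0 t_1^2 + m_0 t_2^2\bigr)}{q^2(m_0 n_0 - l_0^2)} = \frac{p}{q} = \nu.
\ee

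The admissibility conditions are immediate: $m = q m_0 \geq 1$, $n = q n_0 \geq 1$, $l = q l_0 \geq 0$, and $mn - l^2 = q^2(m_0 n_0 - l_0^2) \geq 1$. Hence $\nu \in A_{t_1,t_2}$, which gives $\mathbb{Q}_{>0}\subset A_{t_1,t_2}$; the reverse inclusion is built into the definition of $A_{t_1,t_2}$.

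There is essentially no obstacle in this lemma: the only thing one has to notice is the homogeneity degree difference, which is what makes uniform scaling a useful operation. The real work is pushed into the (much harder) task of showing $\mathbb{N}\subset A_{t_1,t_2}$ in the general coprime case $t_1,t_2\in\mathbb{N}$, which I expect to be handled in the subsequent part of the paper using a construction analogous to the $t=(1,1)$ case via Pythagorean-type parametrizations.
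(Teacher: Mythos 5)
Your proposal is correct and coincides with the paper's own proof: both rest on the observation that the numerator is homogeneous of degree $1$ and the denominator of degree $2$ in $(m,n,l)$, so scaling a triple producing the integer $p$ by $q$ yields the fraction $p/q$, with the admissibility constraints trivially preserved. Your write-up merely spells out the bookkeeping that the paper leaves implicit.
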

\begin{proof} Suppose we know that $(m, n, l)$ can produce an integer $p$,
then $(qm, qn, ql)$ can produce a filling fraction $\frac{p}{q}$ where $q$ is a positive integer. 
\end{proof}

Now, we are able to give a general proof. We will use the same approach as $t=(1, 1)$ case but restrict ourselves on $p \in \mathbb{N}$. We want to solve the following equations:
\be\label{pp}
p \left(l^2-m n\right)-2 l t_2 t_1+m t_2^2+n t_1^2=0
\ee
The proof is separated into two steps. We first show one can produce integers $p$ where $p \geq 2$. 
We make a change of variable. Let $k=-2 l t_2 t_1+m t_2^2+n t_1^2$, we can write $n$ in terms of $k$ and put it back in eq \ref{pp}. We need to solve the following equation:
\be
\frac{m p \left(-k-2 l t_1 t_2+m t_2^2\right)}{t_1^2}+k+l^2 p=0.
\ee
We can solve $l$. We get 
\be
l= \frac{\pm\frac{\sqrt{k m p^2-k p t_1^2}}{t_1}+\frac{m p t_2}{t_1}}{p}.
\ee
We can choose plus sign for the $l$. To make $l$ to be an integer, a safe choice is to let $m=t_1^2$ and then we plug it back to $l$. We have $l=\sqrt{\frac{k (p-1)}{p}}+t_1 t_2.$ We can choose $k=(p-1) p x^2$ where $x$ is an positive integer. We then solve $n$.
We find 
\be
n=\frac{(p-1) p x^2}{t_1^2}+\frac{2 (p-1) t_2 x}{t_1}+3 t_2^2.
\ee
For the generic $p$, if we choose $x=t_1 \beta$ where $\beta$ is a positive integer. After doing this, one can go back and express $n, l$ in terms of $\beta$.
To summarize, one can check if we choose $K$ matrix to be:
\be
K=\left(
\begin{array}{cc}
 t_1^2 & (p-1) t_1 \beta +t_1 t_2 \\
 (p-1) t_1 \beta+t_1 t_2 & (p-1) p \beta ^2+2 (p-1) t_2 \beta+t_2^2\\
\end{array}
\right).
\ee
One can check that the filling fraction is $p$. The determinant $K$ is $\beta ^2 (p-1) t_1^2$ which can be arbitrary large. By choosing such integer $\beta$, all entries of the $K$ matrix are positive. We finish this part of the proof.

Now we do the $p=1$ case. Let $k=-2 l t_2 t_1+m t_2^2+n t_1^2$ and we write $n$ in terms of other variables and we plug $n$ into the eq \ref{pp}. We get
\be\label{p1}
\frac{m \left(-k-2 l t_1 t_2+m t_2^2\right)}{t_1^2}+k+l^2=0.
\ee
In order to make this equation holds for general $t_1, t_2$. A safe choice is to choose $m=2t_1^2$, notice that $m=t_1^2$ is not a suitable choice since $k$ will vanished in the equation. If we admit that $m=2t_1^2$ and solve $l$ in eq \ref{p1}. One get $l=\sqrt{k}+2 t_1 t_2$ so $k$ should be a square number.
We let $k=u^2$. We can go back and solve $n$. We get $n=\frac{2 t_1 t_2 \left(t_1 t_2+u^2\right)+u^2}{t_1^2}$. A save choice is to choose $u=\beta t_1$ where $\beta$ is a positive integer. To sum up, one can choose the corresponding $K$ matrix to be:
\be
K=\left(
\begin{array}{cc}
 2 t_1^2 & t_1 \beta+2 t_1 t_2 \\
 t_1 \beta+2 t_1 t_2 & \beta^2+2 t_2 \beta+2 t_2^2 \\
\end{array}
\right).
\ee
One can plug them in the filling fraction formula and check:
\be
\nu=\frac{-2 l t_2 t_1+n t_1^2+m t_2^2}{m n-l^2}=1.
\ee
The determinant are $t_{1}^{2} \beta^2$ which can be arbitrary large. From this proof, we can generate all the integers, this shows that we can produce all the filling fractions from the lemma. Due to infinite parameters $\beta$, we have infinite solution and determinant can be arbitrary large implying
$\sup sol_{\nu, t_1, t_2} = \infty$. These results are stronger results comparing to the previous special two $t$ cases. By choosing large enough $\beta$, all the elements in the $K$ matrix can be made positive. However, in the previous case, we give more detail constructions. 

Before we close this section, we want to prove another interesting theorem. We already show one can generate any filling fractions if one allows $(m, n, l).$ How about if one fixes the $l \in \mathbb{N}_{0}$? Can the following $K$ matrix generate all the filling fractions? 
\be
K=\left(
\begin{array}{cc}
 m & l_0 \\
 l_0 & n \\
\end{array}
\right).
\ee
We will show any finite restrictions of $l_0$ can not generate all the filling fractions. To be more precisely,
we already know the following:
\be
\mathbb{Q}_{>0}=A_{t_1, t_2}= \bigcup _{l_0\in \mathbb{N}_0} A_{t_1, t_2}^{l_0}
\ee
where 
\be
A_{t_1, t_2}^{l_0}=\{ \frac{-2 l_0 t_2 t_1+n t_1^2+m t_2^2}{m n-l_0^2} |(m, n)\in \mathbb{N}^2, mn-l_0^2 \geq 1\}
\ee
and $\bigsqcup$ means disjoint union.
We have the following theorem.
\begin{theorem}
$\forall l_0 \in \mathbb{N}_0$
\be
A_{t_1, t_2}^{l_0} \subsetneq \mathbb{Q}_{>0}
\ee
\end{theorem}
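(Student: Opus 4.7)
My plan is to establish the algebraic identity
\begin{equation*}
(pm - qt_1^2)(pn - qt_2^2) = (qt_1 t_2 - pl_0)^2,
\end{equation*}
valid whenever $\nu = p/q$ has a representation in $A_{t_1, t_2}^{l_0}$ (it follows by direct expansion of the defining Diophantine equation $p(mn - l_0^2) = q(nt_1^2 + mt_2^2 - 2l_0 t_1 t_2)$). The force of this identity is that once $l_0, t_1, t_2, p, q$ are fixed with $qt_1 t_2 \neq pl_0$, the right-hand side is a fixed nonzero integer, so the factors on the left must be one of finitely many divisor pairs of it. In particular, $|pm - qt_1^2|$ is at most $(qt_1 t_2 - pl_0)^2$, which confines $m$ to an interval of width at most $2(qt_1 t_2 - pl_0)^2/p$ around $qt_1^2/p$, with a symmetric bound on $n$.

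For $l_0 = 0$ no construction is needed: the formula collapses to $\nu = t_1^2/m + t_2^2/n \leq t_1^2 + t_2^2$, so any rational $\nu > t_1^2 + t_2^2$ is missing. For $l_0 \geq 1$, and, after the routine reduction by a common factor, for $t_1, t_2 \geq 1$ with $\gcd(t_1, t_2) = 1$, I would construct a specific $\nu = p/q \notin A_{t_1, t_2}^{l_0}$ as follows. Take $p$ coprime to $t_1 t_2$ and eventually very large, let $c \in \{1, 2, \ldots, t_1 t_2\}$ be the unique residue with $c \equiv -p l_0 \pmod{t_1 t_2}$, and set $q = (pl_0 + c)/(t_1 t_2)$. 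Then $qt_1 t_2 - pl_0 = c$ is a nonzero integer bounded independently of $p$, and the identity specializes to $(pm - qt_1^2)(pn - qt_2^2) = c^2$.

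Using the elementary identity $qt_1^2/p = l_0 t_1/t_2 + ct_1/(pt_2)$, the bound $|pm - qt_1^2| \leq c^2$ forces any admissible $m$ into an interval of width $2c^2/p$ whose center tends to $l_0 t_1/t_2$ as $p \to \infty$, and symmetrically $n$ must lie near $l_0 t_2/t_1$. For $p$ sufficiently large, these intervals have width strictly smaller than the distance from $l_0 t_1/t_2$ (respectively $l_0 t_2/t_1$) to the nearest integer, so they contain no integer at all, except in the degenerate situation $t_1 t_2 \mid l_0$, in which both rationals are already integers. In that case the only integer candidate is the forced pair $(m, n) = (l_0 t_1/t_2,\, l_0 t_2/t_1)$, which satisfies $mn - l_0^2 = 0$ and is therefore ruled out by the admissibility requirement $mn - l_0^2 \geq 1$ built into the definition of $A_{t_1, t_2}^{l_0}$.

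I expect the main obstacle to be precisely this degenerate subcase: the factorization identity alone does not exclude the pair $(l_0 t_1/t_2,\, l_0 t_2/t_1)$, so one has to appeal separately to the strict positivity of $mn - l_0^2$ and verify that for $p$ large no other integer lies in the tiny allowed interval, pinning down this pair as the sole formal candidate. Once that point is handled, the explicit $\nu = p/q$ constructed above belongs to $\mathbb{Q}_{>0} \setminus A_{t_1, t_2}^{l_0}$, which is all that is required.
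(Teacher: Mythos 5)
Your proposal is correct in its main case but takes a genuinely different route from the paper. The paper's own proof is a one-step boundedness argument: it splits $A_{t_1,t_2}^{l_0}$ into the finitely many pairs with $mn\le 2l_0^2$ and the pairs with $mn>2l_0^2$, and on the latter drops the term $-2l_0t_1t_2\le 0$ and uses $mn-l_0^2>mn/2$ to get the uniform bound $2t_1^2+2t_2^2$; a bounded set cannot equal $\mathbb{Q}_{>0}$, and the same observation immediately gives the paper's stronger corollary that even a finite union over $l_0$ is proper. Your route instead starts from the identity $(pm-qt_1^2)(pn-qt_2^2)=(qt_1t_2-pl_0)^2$, which is indeed valid (the difference of the two sides is exactly $p$ times the defining equation $p(mn-l_0^2)-q(nt_1^2+mt_2^2-2l_0t_1t_2)=0$), and then chooses $q=(pl_0+c)/(t_1t_2)$ so that the right-hand side $c^2$ stays bounded as $p\to\infty$. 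This correctly confines $m$ and $n$ to intervals of width $O(1/p)$ around $l_0t_1/t_2$ and $l_0t_2/t_1$, and your treatment of the degenerate subcase $t_1t_2\mid l_0$ (only candidate $(m,n)=(l_0t_1/t_2,\,l_0t_2/t_1)$, excluded by $mn-l_0^2\ge 1$) is sound. What your argument buys that the paper's does not: the excluded fractions you produce are of size roughly $t_1t_2/l_0$, i.e., inside the window where the boundedness bound says nothing, so you exhibit ``small'' missing fractions rather than only large ones. What it costs is considerably more case analysis, and it does not by itself yield the finite-union statement, since your missing fraction depends on $l_0$.

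One case within the theorem's scope is not covered as written: $l_0\ge 1$ with one of $t_1,t_2$ equal to zero (the paper explicitly allows, e.g., $t=(1,0)$), where the modulus $t_1t_2$ degenerates and your residue $c$ is undefined. This is easily patched, e.g., for $t_2=0$ one has $\nu=nt_1^2/(mn-l_0^2)\le (l_0^2+1)t_1^2$, so any larger rational is missing; but you should state this (or simply invoke a boundedness bound as the paper does) to make the claim $\forall l_0\in\mathbb{N}_0$, for all admissible charge vectors, complete. Your separate $l_0=0$ observation, $\nu=t_1^2/m+t_2^2/n\le t_1^2+t_2^2$, is fine and already covers vectors with a zero component in that subcase.
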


\begin{proof}
In order to proof this, we only need to show $A_{t_1, t_2}^{l_0}$ is a bounded set. If $A_{t_1, t_2}^{l_0}$ is a bounded set, it can not be $\mathbb{Q}_{>0}.$ We write $A_{t_1, t_2}^{l_0}$ as 
\begin{align}
A_{t_1, t_2}^{l_0}= &\{ \frac{-2 l_0 t_2 t_1+n t_1^2+m t_2^2}{m n-l_0^2} |(m, n)\in \mathbb{N}^2, mn>2l_0^2, mn-l_0^2 \geq 1\} \cup \nonumber \\
 &\{ \frac{-2 l_0 t_2 t_1+n t_1^2+m t_2^2}{m n-l_0^2} |(m, n)\in \mathbb{N}^2, mn \leq 2l_0^2, mn-l_0^2 \geq 1\}.
 \end{align}
The second set is finite since only finite $m, n$ such that $mn \leq 2l_0^2.$ So the second set is bounded due to finiteness.
The first set is bounded since
\be
\frac{-2 l_0 t_2 t_1+n t_1^2+m t_2^2}{m n-l_0^2} \leq \frac{n t_1^2+m t_2^2}{m n-l_0^2}\leq \frac{n t_1^2+m t_2^2}{\frac{mn}{2}} \leq 2t_1^2+2t_2^2
\ee
Notice that we have used $mn>2l_0^2$ and $m, n \geq 1$. So two bounded sets union is also a bounded set. So we show $A_{t_1, t_2}^{l_0}$ is bounded. This means for given $(t_1, t_2, l_0)$. There exists a maximum filling fraction $\nu$ depends $(t_1, t_2, l_0)$ one can produce. Actually, one can get more general results. We basically use the fact that the union of finitely many bounded set is bounded. So we have 
\be
\bigcup_{\text{finite } l_0} A_{t_1, t_2}^{l_0}\subsetneq \mathbb{Q}_{>0}.
\ee
This theorem indicates the non trivial fact of the original theorem. One can generate any filling fractions if one allows $(m, n, l).$ However, any finite allowed $l$ can not generate any filling fractions.
\end{proof}

\section{FURTHER CLASSIFICATION RESULTS}

In this section, we make more restrictions on the $K$ matrix. We define Bosonic sets and fermonic sets as follows:
\be
B_{t_1, t_2}=\{ \nu(t_1, t_2, m, n, l) | m \geq 1, n \geq 1, l \geq 0, (m, n, l) \in \mathbb{N}_{0}^3, mn-l^2 \geq 1, m=0(mod2)=n. \}
\ee
\be
F_{t_1, t_2}=\{ \nu(t_1, t_2, m, n, l) | m \geq 1, n \geq 1, l \geq 0, (m, n, l) \in \mathbb{N}_{0}^3, mn-l^2 \geq 1, m=1(mod2)=n. \}
\ee
where $\nu(t_1, t_2, m, n, l) \equiv \frac{-2 l t_2 t_1+m t_2^2+n t_1^2}{m n-l^2}$.
Obviously, We have $B_{t_1, t_2} \subset A_{t_1, t_2}, F_{t_1, t_2} \subset A_{t_1, t_2}$. Now the question is: \textbf{Is $B_{t_1, t_2}, F_{t_1, t_2}$ can generate all the fractions?}. i.e is $B_{t_1, t_2}=\mathbb{Q}_{>0}$ or $F_{t_1, t_2}=\mathbb{Q}_{>0}$. We start to answer these questions in this section. Naively we have this inclusion diagram as a set. Now inclusion relation represents the subset relation.
In the previous section, we already prove $A_{t_1, t_2}=\mathbb{Q}_{>0}$. We want to ask two question marks in the diagram.

\begin{tikzcd}[row sep=tiny]
B_{t_1,t_2}\arrow[dr," ?", hook]  &   & \\
                             &  A_{t_1,t_2} \arrow[r, "=",hook] & \mathbb{Q}_{>0}\\ 
F_{t_1,t_2}\arrow[ur," ?", hook]  &  &
\end{tikzcd}
.We will answer them in the next two theorems. In general $t_1, t_2$, we have the following diagram. 
\begin{tikzcd}[row sep=tiny]
B_{t_1,t_2}\arrow[dr,"=", hook]  &   & \\
                             &  A_{t_1,t_2} \arrow[r, "=",hook] & \mathbb{Q}_{>0}\\ 
F_{t_1,t_2}\arrow[ur,"\subsetneq ", hook]  &  &
\end{tikzcd}.

\begin{theorem}
$B_{t_1, t_2}=\mathbb{Q}_{>0}$ and every fraction has infinite bosonic $K$ matrices.
\end{theorem}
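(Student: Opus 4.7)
The plan is to piggyback on the construction of Case 2 of the general proof via a simple parity-forcing scaling. Since $B_{t_1,t_2}\subseteq A_{t_1,t_2}=\mathbb{Q}_{>0}$ is immediate from the definitions, only the reverse inclusion together with the infinitude of bosonic $K$ matrices require proof. The key observation is homogeneity of the filling-fraction formula,
\be
\nu(cm,cn,cl)=\frac{1}{c}\,\nu(m,n,l)\qquad\text{for all }c\in\mathbb{N},
\ee
together with the trivial fact that if $c$ is even, then $(cm,cn,cl)$ automatically has $cm\equiv cn\equiv 0\pmod 2$ and hence lies in $B_{t_1,t_2}$.

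Given $\nu=p/q\in\mathbb{Q}_{>0}$, I will first invoke the integer-producing construction of Case 2 of the general theorem, with the role of $p$ there played by the integer $2p$, to produce a triple $(m_0,n_0,l_0)$ with $\nu(t_1,t_2,m_0,n_0,l_0)=2p$. Explicitly (assuming without loss of generality that $t_1\neq 0$; the case $t_1=0$ is handled by the symmetry $m\leftrightarrow n$, $t_1\leftrightarrow t_2$ of the formula),
\be
m_0=t_1^2,\qquad l_0=(2p-1)t_1\beta+t_1t_2,\qquad n_0=(2p-1)(2p)\beta^2+2(2p-1)t_2\beta+t_2^2,
\ee
with $\beta\in\mathbb{N}$ a free parameter. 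Then scaling by the even integer $2q$ yields the triple $(2qm_0,2qn_0,2ql_0)$, which is manifestly bosonic, has filling fraction $2p/(2q)=p/q$, and has determinant $(2q)^2 t_1^2(2p-1)\beta^2>0$, where the unscaled identity $m_0n_0-l_0^2=t_1^2(2p-1)\beta^2$ is a direct verification already performed (with $p$ in place of $2p$) in Case 2.

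For the infinitude, varying $\beta\in\mathbb{N}$ changes the determinant monotonically, so the bosonic $K$ matrices obtained above are pairwise distinct and their determinants tend to infinity, establishing that every $\nu$ has infinitely many bosonic realizations. I do not anticipate a real obstacle: the even scaling by $2q$ trivially enforces membership in $B_{t_1,t_2}$, and the positivity/size constraints $m,n\geq 1$, $l\geq 0$, $mn-l^2\geq 1$ are inherited from the general construction under any positive scaling. The only genuine bookkeeping is to keep the two distinct cases of the general construction (the $p\geq 2$ branch versus the $p=1$ branch) straight; since we always need to produce the integer $2p\geq 2$, the $p\geq 2$ branch suffices uniformly, which is why the proof is cleaner than one might expect.
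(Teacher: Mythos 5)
Your proof is correct, and it takes a cleaner route than the paper does. The paper's own argument only works out in detail the case where one component of the charge vector vanishes: it uses the explicit ansatz $K=\left(\begin{smallmatrix} m & pm-q\\ pm-q & p(pm-q)\end{smallmatrix}\right)$ and splits into parity cases on $p,q$ (with an even scaling by $\alpha q$ only in the both-odd case), and then simply asserts that ``similar techniques'' handle general $(t_1,t_2)$. You instead reuse the Section IV integer-producing construction (with target integer $2p$, so you always sit in the $P\geq 2$ branch and the determinant $t_1^2(2p-1)\beta^2$ never degenerates) and then apply the homogeneity $\nu(cm,cn,cl)=\nu(m,n,l)/c$ with the even scale $c=2q$, which forces $m\equiv n\equiv 0\pmod 2$ at once. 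This buys uniformity: no parity casework, all charge vectors (with the harmless $t_1\leftrightarrow t_2$, $m\leftrightarrow n$ symmetry when $t_1=0$) handled in one stroke, and the infinitude plus unbounded determinant $(2q)^2t_1^2(2p-1)\beta^2$ come for free from the parameter $\beta$; the paper's approach, by contrast, yields somewhat more explicit small solutions in the special case it treats but leaves the general case unproved. Your verification of the constraints ($m,n\geq 1$, $l\geq 0$, $mn-l^2\geq 1$) under positive scaling is sound, so I see no gap.
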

\begin{proof}
For the first part, suppose one of charge vector is zero. Without loss of generosity, we can assume $t_2=0$. We have ansatz eq \ref{case 1}. Suppose filling fraction is $\frac{p}{q}.$ If $q$ is even, then $p$ must be odd. We can choose $m$ to be even and $n=pm-q$ is even. If $q$ is odd, and $p$ is even, then we choose $m$ to be even. If $p$ and $q$ are both odds, pick an positive integer $\alpha$ which is even and positive. We build $K$ matrix as follows:
\be\label{case1}
K=\left(\begin{array}{cc}  m & p\alpha m-1\\ p\alpha m-1 & p\alpha(p\alpha m-1) \end{array}\right).
\ee
This $K$ matrix generates filling fraction $p\alpha.$ So we can times the matrix by a factor $\alpha q$,.
\be\label{case1}
K=\alpha q\left(\begin{array}{cc}  m & p\alpha m-1\\ p\alpha m-1 & p\alpha(p\alpha m-1) \end{array}\right).
\ee
This $K$ matrix produce the filling fraction $\frac{p}{q}$ and all elements are even. This proves $B_{1, 0}=\mathbb{Q}_{>0}.$ The similar techniques can be used to prove the general charge vector cases. 
\end{proof}

\begin{theorem}
$F_{1, 0}$ can \textbf{not} generate all the fractions. i.e $F_{1, 0} \subsetneq \mathbb{Q}_{>0}.$
\end{theorem}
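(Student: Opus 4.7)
The plan is to exhibit a $2$-adic obstruction that prevents the fermionic set $F_{1,0}$ from realising rationals whose numerator (in lowest terms) is even. For $t=(1,0)$ the filling-fraction formula collapses dramatically to
\be
\nu(1,0,m,n,l)=\frac{n}{mn-l^2},
\ee
so the task reduces to analysing this two-parameter expression under the parity constraints $m,n\equiv 1\pmod 2$.

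The key computational step will be to simplify this fraction. I would compute the greatest common divisor $d=\gcd(n,\,mn-l^2)$ by the standard trick $\gcd(n,mn-l^2)=\gcd(n,-l^2)=\gcd(n,l^2)$, noting that any common divisor of $n$ and $mn-l^2$ divides $mn-(mn-l^2)=l^2$, and conversely. Since $n$ is odd, every divisor of $n$ is odd, so $d$ is odd. Writing $\nu$ in lowest terms as $p/q=(n/d)/((mn-l^2)/d)$, I conclude that the reduced numerator $p=n/d$ is odd.

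Thus every element of $F_{1,0}$, when written as $p/q$ with $\gcd(p,q)=1$, necessarily has $p$ odd. To close the proof it then suffices to produce a single positive rational whose reduced numerator is even; the simplest witness is $\nu=2=2/1$. For explicitness one can also verify directly: setting $n/(mn-l^2)=2$ with $m,n$ odd forces $n(2m-1)=2l^2$, and the left side is odd while the right side is even, a contradiction. Hence $2\in\mathbb{Q}_{>0}\setminus F_{1,0}$, so the inclusion $F_{1,0}\subset\mathbb{Q}_{>0}$ is strict.

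I do not anticipate a real obstacle here: the only substantive step is spotting the invariant (parity of the numerator in lowest terms), and once the identity $\gcd(n,mn-l^2)=\gcd(n,l^2)$ is in hand the rest is mechanical. The mild care-point is to reduce the fraction \emph{before} reading off parity, since $mn-l^2$ itself can be even (when $l$ is odd); but because the gcd is forced to be odd, this even factor never migrates into the numerator, which is exactly what makes the argument work.
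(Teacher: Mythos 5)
Your proposal is correct and rests on the same idea as the paper's proof: a parity obstruction showing that with $m,n$ odd the reduced numerator of $\nu=\frac{n}{mn-l^2}$ must be odd, so fractions with even numerator (e.g.\ $\nu=2$) are unreachable. Your write-up via $\gcd(n,mn-l^2)\mid n$ and the explicit witness is simply a more careful rendering of the paper's terser argument.
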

\begin{proof}
The key is that any integer times even number is even. In this case $\nu=\frac{n}{mn-l^2}=\frac{p}{q}$ and $p, q$ are coprime to each other. If we choose $p$ is even and $q$ is odd, then $n$ must be even. If this is not the case, then $p, q$ will not be coprime. So we proved $F_{1, 0}$ can not generate all the filling fractions. In the general charge vector case, whether $F_{t_1, t_2}$ can generate all the fractions will depend on the structure of $t_1, t_2$.
\end{proof}

\section{Conclusion}

In this project, we prove that given any rational number $\nu>0$ and any integers charge vector $t_1, t_2$ at least one of them are nonzero integers. There exist infinite pairs of three integers $m \geq 1, n\geq 1, l \geq 0$ such that $\nu=\frac{-2 l t_2 t_1+m t_2^2+n t_1^2}{m n-l^2}$. i.e, infinite allowed $K$ matrix. Moreover, given any $C>0$, one can find a solution which determinant $mn-l^2>C$. The physical meaning of these theorems is that there are infinite different types of the topological phase of matter which can produce for any given rational filling fractions $\nu>0$. There are some other theorems mentioned in the text. We summarize our theorems. Given $(t_1, t_2) \in \mathbb{N}^2 \setminus (0,0)$ and any rational filling fractions $\nu \in \mathbb{Q}_{> 0}$, we have the following:

\begin{center}
    \begin{tabular}{| l | l | }
    \hline
     Theorems &  Meaning \\ \hline
    $A_{t_1, t_2}=\mathbb{Q}_{>0}    \nonumber $ & Any filling fractions can be produced. \\ \hline
    $sup sol_{\nu, t_1, t_2}=\infty$ & There are always infinite solutions for a given filling fraction. \\ \hline
    $ \bigcup_{\text{finite } l_0} A_{t_1, t_2}^{l_0}\subsetneq \mathbb{Q}_{>0}$ & Any finite collections of $l_0$ can not generate any filling fractions.  \\ \hline
  $B_{t_1, t_2}=\mathbb{Q}_{>0} $ & There always exist infinite Bosonic matrices for a given filling fraction.  \\ \hline
  $F_{t_1, t_2} \subsetneq \mathbb{Q}_{>0}$ & Fermonic matrices are not always exist for a given filling fraction.  \\ \hline
    \end{tabular}
\end{center}

\begin{acknowledgments}
The author would like to thank Sun-Ting Tsai, Yixu Wang, Jerry YC Hu, Mia Chen for helpful comments on our draft and support this project. The author is supported by the JQI (Joint Quantum Institute) Research Assistantship and Professor Mohammad Hafezi.
\end{acknowledgments}

\section{Appendix}
\paragraph{Quadratic residue}
In number theory, an integer $q$ is called a quadratic residue modulo $n$ if it is congruent to a perfect square modulo $n$; i.e., if there exists an integer $x$ such that:
\be
{\displaystyle x^{2}\equiv q{\pmod {n}}.}
\ee
Otherwise, $q$ is called a quadratic nonresidue modulo n. We prove a theorem related to the quadratic residue.

\begin{lemma}
Given $p, q \in \mathbb{N}, (p, q)=1$ and if  $q$ or $-q$ is a quadratic residue modulo $p$. There are always infinitely many  $a>0, b>0, l>0, (a, b ,l )\in \mathbb{N}$ such that $p|(a^2-b^2)$ and $lp-q=ab$.
\end{lemma}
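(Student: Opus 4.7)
The plan is to exploit the quadratic residue hypothesis to produce a single base pair modulo $p$ meeting both constraints, and then lift it to an infinite family by shifting $a$ by multiples of $p$. I would start by rewriting the required conditions in congruence form: the equation $lp - q = ab$ is equivalent to $ab \equiv -q \pmod{p}$ together with integrality of $(ab+q)/p$, while $p \mid a^2 - b^2$ is equivalent to $a \equiv \pm b \pmod{p}$. Combining these forces $\mp b^2 \equiv q \pmod p$, which is precisely where the hypothesis that $q$ or $-q$ is a quadratic residue enters: the sign for which one needs the residue determines which of $a \equiv b$ or $a \equiv -b \pmod p$ to use.

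Suppose first that $q$ is a quadratic residue mod $p$; pick $b_0 \in \{1,\dots,p-1\}$ with $b_0^2 \equiv q \pmod p$, which is automatically coprime to $p$ since $\gcd(q,p)=1$. I would set $b = b_0$ and $a = kp - b_0$ for an integer parameter $k \in \mathbb{N}$. Then
\[
a^2 - b^2 = (a-b)(a+b) = (kp - 2b_0)(kp)
\]
is visibly divisible by $p$, and $l := (ab + q)/p = k b_0 + (q - b_0^2)/p$ is an integer because $p \mid q - b_0^2$. In the dual case where $-q$ is a quadratic residue, pick $b_0^2 \equiv -q \pmod p$ and take $a = kp + b_0$ instead; then $a^2 - b^2 = kp(kp + 2b_0)$ and $l = k b_0 + (b_0^2 + q)/p \in \mathbb{Z}$, and the verifications are symmetric.

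The only non-automatic requirement is the positivity of $a$ and $l$. Since the correction term $(q \mp b_0^2)/p$ is a fixed integer (possibly negative, but bounded in absolute value by roughly $p$) while $k b_0$ grows without bound, choosing $k$ sufficiently large makes both $a = kp \pm b_0 > 0$ and $l = k b_0 + (\text{const}) > 0$, giving infinitely many admissible triples parameterized by $k$. I do not anticipate any real obstacle: the content of the lemma is that the quadratic residue hypothesis supplies exactly the one modular obstruction needed to align both divisibility constraints simultaneously, after which Hensel-style lifting along the parameter $k$ produces the infinite family automatically.
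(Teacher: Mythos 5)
Your proposal is correct and follows essentially the same route as the paper: take $b$ to be a square root of $q$ or $-q$ modulo $p$, choose $a\equiv \mp b \pmod p$ so that $p\mid a^2-b^2$, and observe that $l=(ab+q)/p$ is then automatically an integer, positive once the free parameter is large. The only cosmetic difference is the parametrization (you fix $b=b_0<p$ and vary $a=kp\mp b_0$, while the paper varies the square root and the multiple of $p$ directly); both produce the same infinite family.
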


\begin{proof}
Suppose $-q$ is a quadratic residue modulo $p$. Then there exist some $l_0$ such that $pl_0-q=h^2$ where $h$ is an integer.
Then one can easily see if we choose $p(l_0+h)-q=h^2+ph=ab$, and then choose $a=h+p, b=h$. We find that $p|a-b$. Also there are infinite and arbitrary large $l_0$ from the theory of Quadratic residue. This proof our lemma and the rest follows as main proof. Conversely, if there exist $a, b, l, t$ such that $ab=pl-q, a-b=pt$, then $q$ is a Quadratic residue modulo $p$ since $pl-q=b(b+pt)=b^2+bpt$.

On the other hand, Suppose $q$ is a quadratic residue modulo $p$. Then there exists some $l_0$ such that $pl_0+q=h^2$ where $h$ is a positive integer. We can write it as $-pl_0-q=-h^2$ the flip $l_0 \to -l_0$. We have $pl_0-q=-h^2$. We can choose large enough $n$ such that $p(l_0+nh)-q=-h^2+pnh=ab$ is large positive number. We can choose $a=-h+pnh$ and $b=h$. We have $a+b=pnh$ then $p|(a+b).$  We finish the proof. We give a example which $q$ or $-q$ is not quadratic residue modulo $p$. We choose $p=5, q=3$. It is easy to see $[3]_5$ and $[-3]_5=[2]_5 $ are not quadratic residue by computing Legendre Symbol \citep{SystemModeler}.
\end{proof}

\paragraph{Pythagorean triple}
A Pythagorean triple consists of three positive integers $a$, $b$, and $c$, such that $a^2 + b^2 = c^2$. Such a triple is commonly written $(a, b, c)$, and a well-known example is $(3, 4, 5)$. If $(a, b, c)$ is a Pythagorean triple, then so is $(ka, kb, kc)$ for any positive integer $k$. A primitive Pythagorean triple is one in which  $a$, $b$, and $c$ are coprime.

Euclid's formula is a fundamental formula for generating Pythagorean triples given an arbitrary pair of integers $m$ and $n$ with $m > n > 0$. The formula states that the integers
\be
{\displaystyle a=m^{2}-n^{2},\ \,b=2mn,\ \,c=m^{2}+n^{2}}.
\ee
form a Pythagorean triple. The triple generated by Euclid's formula is primitive if and only if $m$ and $n$ are coprime and not both odd. When both m and n are odd, then $a$, $b$, and $c$ will be even, and the triple will not be primitive.
Despite generating all primitive triples, Euclid's formula does not produce all triples—for example, $(9, 12, 15)$ cannot be generated using integer $m$ and $n$. This can be remedied by inserting an additional parameter $k$ to the formula. The following will generate \textbf{all} Pythagorean triples uniquely:
\be
a=k\cdot (m^{2}-n^{2}),\ \,b=k\cdot (2mn),\ \,c=k\cdot (m^{2}+n^{2}).
\ee
where $m, n$, and $k$ are positive integers with $m > n$, and with $m$ and $n$ coprime and not both odd.

\paragraph{Some interesting Quadratic Diophantine equations}
The purpose of this section is to convince the reader that Quadratic Diophantine equations are already difficult to be solved in general. Namely, how many integer solutions of given equations is difficult to know. We present four examples here to show the nontriviality of our results.

We first choose this equation to illustrate its quite nontrivial in two variables. We want to find integer solutions $(x, y) \in \mathbb{Z}^2$ of the following equation:
\be
3x^2+2=y^2.
\ee
There is no solution to this equation. The proof is simple. We modulo $3$ on both side, we get $2=y^2 (mod 3)$. But one can check every integer square module $3$ can only be $1 or 0$. So there is no solution for this equation. This is the same technique we show in Appendix a since $2$ is not the quadratic residue modulo $3$. The second example is:
\be
7x^2+2=y^3.
\ee
If we modulo $7$ to both sides of the equation. We have $y^3=2(mod 7)$. However, any integer cubes modulo $7$ can not be $2$. We list $y^3(mod7)$ in the table.
\begin{displaymath}
\begin{array}{|c c|c|}
y(mod7) & y^3 (mod7) \\ 
\hline 
0 & 0  \\
1 & 1 \\
2 & 2^3=8=1(mod7) \\
3 & 3^3=27=6(mod7) \\
4 & 4^3=64=1(mod7) \\
5 & 5^3=125=6(mod7) \\
6 & 6^3=216=6(mod7) \\
\end{array}
\end{displaymath}

The three variable quadratic equations are much more tricky. We only list one example here. Suppose we want to find nontrivial solution $(x, y, z) \in \mathbb{Z}^3$ and $(x, y, z) \neq (0, 0, 0)$ of this equation.
\be
3x^2-7y^2-11z^2=0.
\ee
There is no nontrivial solution for this equation. WLOG, one can assume that no common factors for $(x, y, z)$. We modulo $7$ both sides and get $3x^2=4z^2 (mod 7)$. We will get $x=z=0(mod7)$. However, if we put them back in the original equation, we have $y=0(mod7)$. This contradicts there are no common factors for $(x, y, z)$. Amazingly, if we change the original equation to $3x^2-7y^2-17z^2=0.$ There are nontrivial solutions like $(x, y, z)=(5, 1, 2), (8, 5, 1)$.
This indicates by changing only one parameter in the Diophantine equation will lead to completely different results. One can see Hasse-Minkowski Theorem or Hasse principle \citep{gamzon2006hasse} for general homogeneous Quadratic Diophantine equations. However, our equations are not homogeneous, so we can not use this theorem. These examples show that our results are nontrivial.

\bibliographystyle{plain}
\bibliography{references}
\end{document}